\theoremstyle{definition}
\newtheorem{proposition}{Proposition}
\newtheorem{lemma}{Lemma}
\newtheorem{remark}{Remark}
\newtheorem{assumption}{Assumption}
    \renewcommand{\@algocf@capt@plain}{above}
\newcommand{\state}{{x}}
\newcommand{\control}{{u}}
\newcommand{\stagecost}{{c}}
\newcommand{\sumcost}{{J}}
\newcommand{\lqsumcost}{{J}}
\newcommand{\valuefunc}{{V}}
\newcommand{\dyn}{{f}}
\newcommand{\totalnumberofplayer}{{N}}
\newcommand{\stage}{{t}}
\newcommand{\totalstatedim}{{n}}
\newcommand{\totalcontroldim}{{m}}
\newcommand{\initialstatedistribution}{{\mathcal{D}}}
\newcommand{\linearpolicy}{{K}}
\newcommand{\policy}{{\pi}}
\newcommand{\policyofplayeri}{{\pi^i}}
\newcommand{\optimalpolicyofplayeri}{{\pi^{i*}}}
\newcommand{\statecostmatrix}{{Q}}
\newcommand{\controlcostmatrix}{{R}}
\newcommand{\statematrix}{{A}}
\newcommand{\controlmatrix}{{B}}
\newcommand{\nominalstate}{{\bar{x}}}
\newcommand{\nominalcontrol}{{\bar{u}}}
\newcommand{\NNpolicy}{{\pi_\theta}}
\newcommand{\stepsize}{{\eta}}
\newcommand{\statecostlinearterm}{{q}}
\newcommand{\controlcostlinearterm}{{r}}
\newcommand{\expectationoverinit}{\mathbb{E}_{x_0\sim\initialstatedistribution}}
\newcommand{\sumcostregularized}{\hat{J}_{\rho}}
\title{\LARGE \bf
Multi-Agent Guided Policy Search for Non-Cooperative Dynamic Games
}
\author{Jingqi Li$^*$, Gechen Qu$^{*}$, Jason J. Choi, Somayeh Sojoudi, Claire Tomlin
\thanks{*Equal contribution. }
\thanks{Jingqi Li is with the University of Texas at Austin. Gechen Qu, Jason Choi, Somayeh Sojoudi, and Claire Tomlin are with the University of California, Berkeley. Corresponding author: Jingqi Li {\tt\small jingqi.li@austin.utexas.edu}}
}
\begin{document}

\maketitle
\thispagestyle{empty}
\pagestyle{empty}

\begin{abstract}
Multi-agent reinforcement learning (MARL) optimizes strategic interactions in non-cooperative dynamic games, where agents have misaligned objectives. However, data-driven methods such as multi-agent policy gradients (MA-PG) often suffer from instability and limit-cycle behaviors. Prior stabilization techniques typically rely on entropy-based exploration, which slows learning and increases variance. We propose a model-based approach that incorporates approximate priors into the reward function as regularization. In linear quadratic (LQ) games, we prove that such priors stabilize policy gradients and guarantee local exponential convergence to an approximate Nash equilibrium. We then extend this idea to infinite-horizon nonlinear games by introducing Multi-agent Guided Policy Search (MA-GPS), which constructs short-horizon local LQ approximations from trajectories of current policies to guide training. Experiments on nonlinear vehicle platooning and a six-player strategic basketball formation show that MA-GPS achieves faster convergence and more stable learning than existing MARL methods.
\end{abstract}

\section{Introduction}

Non-cooperative dynamic games \cite{bacsar1998dynamic} provide a principled framework for modeling strategic interactions among self-interested agents, with applications in robotic coordination \cite{maravall2013coordination}, autonomous driving \cite{zhou2022multi}, and human–robot interaction \cite{yu2021optimizing}. A central solution concept is the Nash equilibrium \cite{bacsar1998dynamic}, which captures unilateral optimal decision-making among agents. Multi-agent reinforcement learning (MARL) has emerged as a promising data-driven tool for computing Nash equilibria in high-dimensional games. However, learning Nash equilibria remains challenging due to non-stationarity from simultaneous policy updates and conflicting objectives of agents. As a result, multi-agent policy gradients (MA-PG) often exhibit unstable dynamics or converge to limit cycles \cite{mazumdar2020policy}, even under smooth objectives and dynamics.

Classical approaches instead rely on model-based optimization \cite{bacsar1998dynamic,di2019ddp}. For example, \cite{fridovich2020efficient} and \cite{laine2023computation} compute local Nash equilibrium using iterative linear quadratic (LQ) game approximations and Newton-type algorithms. While these methods usually converge stably, they scale poorly as the number of agents increases and the dynamics become nonlinear with complex objectives. In such cases, they become computationally expensive and may converge to inferior policies \cite{kossioris2008feedback}.

These limitations highlight the complementary strengths of MARL and model-based optimization, suggesting that a hybrid method could provide both scalability and stability. In this paper, we move beyond prior efforts that integrate such approaches for cooperative multi-agent control \cite{jiang2021mrgps,li2025multi} and ask: \emph{how can we integrate these two approaches to improve policy learning for non-cooperative dynamic games?}

Our key idea is to augment MA-PG methods with a model-based prior, defined as an approximate Nash equilibrium solution derived from local models (e.g., LQ approximations). This prior provides a shared reference that stabilizes policy updates across agents, avoids the limit cycles of pure MA-PG methods caused by non-unique Nash equilibria, simultaneous learning, and conflicting objectives, and ultimately guides agents toward a consistent Nash equilibrium.

We propose a framework that incorporates this prior into the reward function to regularize MA-PG updates. This constitutes a new mechanism for stabilizing MARL, outperforming existing MARL stabilization approaches that primarily rely on entropy-driven exploration~\cite{hambly2023policy, aggarwal2024policy}. Beyond faster convergence, our method yields neural network policies that can be executed in real time for complex multi-agent dynamic games, where classical model-based optimization methods are often computationally demanding for timely decision-making. Our contributions are:

\begin{enumerate}
\item For LQ games, characterized by linear dynamics and quadratic objectives, we theoretically prove that any stabilizing feedback policy, not necessarily to be a Nash equilibrium policy, can serve as effective guidance and establish formal guarantees on the stability and local convergence of guided policy gradients toward an approximate Nash equilibrium.
\item We extend the framework to nonlinear games by leveraging the fact that Nash equilibria of local LQ game approximations closely resemble those of the original nonlinear games. To construct these approximations efficiently, we use trajectories generated by the current neural network policies and treat the Nash equilibrium of the corresponding local LQ game approximation as guidance, eliminating the need for precomputed guidance policies used in classical guided policy search.
\item We evaluate our method on LQ games, nonlinear vehicle platooning, and a six-player strategic basketball formation, demonstrating faster and more stable convergence with higher-quality policies than state-of-the-art MARL methods in dynamic games with continuous state and action spaces.
\end{enumerate}

\section{Related works}

\noindent \textbf{Model-based solutions to non-cooperative games.} Computing Nash equilibria of non-cooperative games is computationally challenging \cite{daskalakis2009complexity}; however, under smoothness assumptions on dynamics and costs, they can be solved in polynomial time \cite{bacsar1998dynamic}. For instance, LQ games enable efficient computation, with feedback Nash equilibria obtained via coupled Riccati equations or KKT conditions \cite{laine2023computation}. Extending these solutions to general nonlinear games remains difficult. Recent advances include iLQGames, which applies local LQ models and dynamic programming in an iterative scheme \cite{fridovich2020efficient}, and differential dynamic programming \cite{di2019ddp}. Other approaches compute equilibria using Newton’s method for KKT systems \cite{facchinei2009generalized,laine2023computation} or approximate dynamic programming \cite{epelman2011sampled,perolat2015approximate}, but high-dimensional nonlinear games remain challenging.


\noindent \textbf{Data-driven MARL methods.} 
MARL has been extensively studied in cooperative settings \cite{tan1993multi,claus1998dynamics,tampuu2017multiagent,oroojlooy2023review,li2025multi}, with notable successes in actor–critic methods \cite{konda1999actor,ackermann2019reducing,de2020independent,yu2022surprising}, value decomposition \cite{sunehag2018value,rashid2020monotonic}, and graph neural network approaches \cite{nayak2023scalable}. In contrast, learning game-theoretic equilibria \cite{lowe2017multi} (e.g., Nash equilibria) with deep RL remains challenging due to coupled dynamics and conflicting objectives. Policy gradient methods and their variants converge reliably in two-agent zero-sum games \cite{zhang2019policy,daskalakis2020independent,sokota2022unified} and potential games \cite{hosseinirad2023policy}, where agents’ costs are symmetric. In general non-cooperative games, however, they may converge to limit cycles rather than a Nash equilibrium \cite{mazumdar2020policy}. Furthermore, directly applying guided policy search \cite{levine2013guided}, which improves training in single-agent RL, to multi-agent Nash equilibrium settings is nontrivial due to misaligned objectives, except in fully cooperative cases \cite{jiang2021mrgps,li2025multi}.


\noindent \textbf{Training stability and variance reduction in deep RL.} 
Introducing random noise into the dynamics can stabilize MA-PG methods in non-cooperative games \cite{hambly2023policy, aggarwal2024policy}, suggesting that stochastic policies may improve MARL performance. However, this stabilization comes with increased gradient variance. Variance reduction techniques are well-established in single-agent RL \cite{sutton2018reinforcement}, focusing on managing the bias-variance trade-off \cite{thomas2014bias, schulman2015high, gu2022q}. In multi-agent contexts, existing variance reduction methods include optimal baselines \cite{kuba2021settling} and spectral normalization of critics \cite{mehta2023effects}, though these approaches do not incorporate model-based priors.

Model-based priors have been explored to reduce variance in RL \cite{cheng2019control}. Unlike \cite{cheng2019control}, which regularizes the policy output directly, our method incorporates regularization into the reward function. Moreover, prior works such as \cite{hambly2023policy, aggarwal2024policy,lidard2024blending} stabilize MA-PG in finite-horizon games using entropy-driven exploration. In contrast, we show that suitable policy guidance ensures stability in infinite-horizon settings, which could be more practical for real-time decision-making since the optimal policy is time-invariant and avoids computing complex time-varying Nash equilibrium policies.

\section{Background: non-cooperative dynamic games}
\subsection{Problem settings}
We consider an $\totalnumberofplayer$-agent discrete-time non-cooperative dynamic game. At each stage $\stage$, we associate with each agent $i\in[\totalnumberofplayer]:=\{1,\dots,\totalnumberofplayer\}$ a state vector $\state_\stage^i\in\mathbb{R}^{\totalstatedim_i}$ and a control input vector $\control_\stage^i\in\mathbb{R}^{\totalcontroldim_i}$. Let $\totalstatedim:=\sum_{i=1}^\totalnumberofplayer \totalstatedim_i$ be the dimension of the joint state, and $\totalcontroldim:=\sum_{i=1}^\totalnumberofplayer \totalcontroldim_i$ be the dimension of the joint control input.  We denote by $\state_\stage:=[\state_\stage^i]_{i=1}^\totalnumberofplayer$ and $\control_\stage = [\control_\stage^i]_{i=1}^\totalnumberofplayer$. Let $\initialstatedistribution$ be the initial state distribution. Following standard assumptions in smooth dynamic games involving continuous state and action spaces \cite{bacsar1998dynamic,fridovich2020efficient, laine2023computation}, we consider the differentiable dynamics $f$:
\begin{equation}
    \state_{\stage+1} = \dyn (\state_\stage, \control_\stage^1,\dots,\control_\stage^\totalnumberofplayer), \  \ \state_0\sim \initialstatedistribution,\ \ t = 0,1,\dots
\end{equation} 
and each agent $i$ has an individual differentiable stage cost\footnote{This assumption does not restrict the applicability of our method to games with non-differentiable objectives or dynamics; for instance, in the six-player strategic basketball formation experiments of Section~\ref{sec:results}, our method still outperforms existing MA-PG approaches despite the non-differentiable objective.
} $\stagecost^i(\state_\stage, \control_\stage^1,\dots, \control_\stage^\totalnumberofplayer)$. Let $\policy^i(x) $ be the time-invariant policy of agent $i$, which in general is a random distribution conditioned on the state variable $x$. Define $\policy(\state): = [\policy^1(\state), \dots, \policy^N (\state)]$, and $\control\sim \policy(\state)$, where $\control^i \in \policy^i(\state)$, $\forall i\in[N]$. Denote by $\policy^{-i} : = \policy \setminus \{\policy^i\}$. 
Given the policies $\{\policy^i\}_{i=1}^N$, we define the value function of the $i$-th agent $\valuefunc^{i,\policy}(\cdot): \mathbb{R}^n \to \mathbb{R}$ as 
\vspace{-0.5em}
\begin{equation}
    \valuefunc^{i,\policy}(x) =  \mathbb{E}_{\substack{\control_\stage \sim\policy(\state_\stage)}}\Big[\sum_{\stage=0}^\infty  \gamma^\stage \stagecost^i  (\state_\stage, \control^1_\stage, \cdots  \control^N_\stage)  \Big],
\end{equation}
where $\gamma\in(0,1]$ is the time-discount factor. We have the Bellman equation for the $i$-th agent,
\begin{equation}
    \resizebox{1\hsize}{!}{$\displaystyle\valuefunc^{i,\policy}(\state) = \mathbb{E}_{\substack{\control_\stage \sim \policy(\state_\stage)}} \big[ \stagecost^i(\state, \control^1,\cdots, \control^N )  + \gamma \valuefunc^{i,\policy}(\dyn(\state, \control^1,\dots, \control^N)) \big].$}
    \label{eq:bellman}
\end{equation}

In a non-cooperative game, each agent $i\in\totalnumberofplayer$ minimizes its cumulative cost $\sumcost^i$ over its policy $\policy^i$, while fixing $\policy^{-i}$
\begin{equation}\label{eq:infinite horizon non-cooperative games}
    \sumcost^i (\policy^i, \policy^{-i}):= \mathbb{E}_{\state_0\sim\initialstatedistribution} \big[ \valuefunc^{i,\policy}(\state_0) \big].
\end{equation}
A set of policies $\{\optimalpolicyofplayeri\}_{i=1}^N$ is a \emph{local feedback Nash equilibrium} if there exists $\epsilon>0$ such that for all $i\in[N]$, $\sumcost^i(\policyofplayeri, \policy^{-i*}) \ge \sumcost^i(\optimalpolicyofplayeri, \policy^{-i*}),\  \forall \policyofplayeri,\  \text{s.t.}\; D_{KL}(\policyofplayeri | \optimalpolicyofplayeri) \le \epsilon,$ 
where $D_{KL}$ is the KL divergence. For deterministic policies, this can be replaced with the L2 norm.
 
In MA-PG methods, each agent $i$'s policy is updated as:
\begin{equation}\label{eq:simultaneous gradient play}
    \policy^i \gets \policy^i - \stepsize \nabla_{\policy^i} \sumcost^i(\policy^i, \policy^{-i}),
\end{equation}
where $\stepsize$ is the step size. 
We denote the concatenation of all agents' policy gradient as the \textit{pseudo-gradient} $w(\policy)$:
\begin{equation}
    w(\policy) : = \begin{bmatrix}
        \nabla_{\policy^{1}} \sumcost^1(\policy^{1}, \policy^{-1}) \\ 
        \vdots \\ 
        \nabla_{\policy^{\totalnumberofplayer}} \sumcost^\totalnumberofplayer(\policy^{\totalnumberofplayer}, \policy^{-\totalnumberofplayer})
    \end{bmatrix} .
\end{equation}
We can represent the MA-PG update as a dynamical system with respect to $\policy$, often referred to as the \emph{MA-PG dynamics}:
\begin{equation}\label{eq:mapg dynamics}
    \policy \gets \policy - \eta  w(\policy).
\end{equation}
\subsection{Non-cooperative linear quadratic games}
\label{sec:lq}
In this subsection, we introduce a special class of non-cooperative games known as linear–quadratic (LQ) games \cite{bacsar1998dynamic}, a simplified setting that allows theoretical characterization of the performance of policy gradient methods. In infinite-horizon LQ games, we consider for each agent $i\in[N]$,
\begin{align}
        \min_{\policy^i}\  & \expectationoverinit \Big[\sum_{\stage=0}^\infty \state_\stage^\top \statecostmatrix^i \state_\stage +  \control_\stage^{i\top} \controlcostmatrix^i \control_\stage^i \Big] \label{eq:lq-game-formulation}\\
        \textrm{s.t. }& \state_{\stage+1} = \statematrix \state_\stage +\sum_{i=1}^N\controlmatrix^i \control_\stage^i, \; \control_\stage^i \sim \policy^i(\state_\stage),\; \forall t=0,1,\dots \label{eq:time-invariant linear dynamics} \nonumber
\end{align}
where $A\in\mathbb{R}^{n\times n}$, $B^i\in \mathbb{R}^{n\times m_i}$, $Q^i\in\mathbb{R}^{n\times n}$ and $R^i \in\mathbb{R}^{m_i\times m_i}$, $\forall i\in[N]$. We consider the following assumption:

\begin{assumption}\label{assumption: infinite horizon LQ}
    The matrices $\{\statecostmatrix^i, \controlcostmatrix^i \}_{i=1}^N$ are positive definite. The linear dynamics in \eqref{eq:lq-game-formulation} is stabilizable, i.e., there exists a set of time-invariant feedback matrices $\{\linearpolicy^i\}_{i=1}^N$ such that the closed-loop dynamics {$\state_{\stage+1} = \bar{A}_{K}\state_\stage  $ is exponentially stable, where $\bar{A}_{K}:=\statematrix - \sum_{i=1}^N \controlmatrix^i \linearpolicy^i$}. 
    To ensure sufficient exploration, the covariance matrix $\Sigma_0:=\mathbb{E}_{x_0}[x_0x_0^\top]$ of the initial state distribution $\initialstatedistribution$ is assumed to be full rank. 
\end{assumption}
\vspace{-0.5em}
Assumption~\ref{assumption: infinite horizon LQ} is used only for the theoretical analysis of our method and is not required in practical experiments. It ensures the existence of a Nash equilibrium of \eqref{eq:lq-game-formulation}. In LQ games with deterministic dynamics, we can specify the agent's policy as a deterministic linear feedback policy, each given by $u_t^i = -K^i x_t$. With a slight abuse of notation $w(\cdot)$, we consider the following pseudo-gradient \cite{mazumdar2020policy}
\begin{equation}
    w(K):= \begin{bmatrix}
        \nabla_{K^1} \sumcost^1(K^1,\dots,K^N)\\
        \vdots \\
        \nabla_{K^N} \sumcost^N(K^1, \dots, K^N)
    \end{bmatrix}, K:=[K^i]_{i=1}^N, 
    \label{eq:gradient_mapping}
\end{equation}
to analyze the policy gradient method for computing the feedback Nash equilibrium.
We analytically derive the policy gradient by substituting the linear policy into the Bellman equation \eqref{eq:bellman},
\begin{align}
    \lqsumcost^i( & K^1,\dots,K^N) =  \label{eq:derive_condition_K1} \\ 
    & \expectationoverinit \Big[ x_0^\top Q^i x_0 + x_0^\top {K^i}^\top R^i K^i x_0 + x_0^\top \bar{A}_{K}^\top P_K^i \bar{A}_{K} x_0 \Big], \nonumber
\end{align}
where $P_{K}^{i}$ is given as the unique positive definite solution of the linear quadratic Bellman equation
\begin{equation}
 P_{K}^{i} = \bar{A}_{K}^{\top} P_{K}^{i} \bar{A}_{K} + {K^{i}}^{\top} R^i K^{i} + Q^i.
\label{eq:P-bellman}\vspace{-0.25em}
\end{equation}
Under Assumption~\ref{assumption: infinite horizon LQ}, we have $\mathbb{E}_{x_0}\big[x_0x_0^\top\big]\succ 0$ and $\Sigma_K := \mathbb{E}_{x_0} \big[ \sum_{t=0}^{\infty} x_t x_t^{\top}\big] \succ 0$. As noted in \cite{mazumdar2020policy}, by taking the gradient of \eqref{eq:derive_condition_K1} with respect to $K^i$, we have
\begin{equation}
    \nabla_{K^i} \lqsumcost^i(K^1,\dots,K^N) = 2 (R^i K^i - {B^i}^{\top} P_{K}^{i}  \bar{A}_{K} ) \Sigma_{K}.
\label{eq:derive_condition_K2}    
\end{equation}

\section{Analysis: Stability and bias of the guided policy optimization in LQ games}\label{sec:lq-results}
In this section, we present our analytical results for guided policy optimization in LQ games. First, we introduce a model-based guidance term into the original cost function, 
which serves as a regularization mechanism. We then focus specifically on the theoretical analysis of LQ games.


To formalize our guided policy optimization in  LQ games, let $\check{K}$ be an arbitrary stabilizing feedback policy for the dynamics in \eqref{eq:lq-game-formulation}. 
We consider the LQ games in Section \ref{sec:lq}, but now with a cost for agent $i$, regulated by a deviation from the guiding control $\check{K}^i x_t$, given as
\begin{align}
    \sumcostregularized^i & ( K^1,\dots,K^N) \label{eq:guided_lq_cost} \\ 
    {\scriptstyle :=} & \resizebox{0.96\hsize}{!}{$\displaystyle \mathbb{E}_{x_0}\!\Big[\! \sum_{t=0}^\infty \big(x_t^\top Q^i x_t + \state_\stage^\top \linearpolicy^{i\top} R^i \linearpolicy^i \state_\stage \big) +\rho {x_t^\top (K^i - \check{K}^i)^\top R^{i} (K^i - \check{K}^i) x_t} \Big]$}  \nonumber \\
    {\scriptstyle =}&  \resizebox{0.8\hsize}{!}{$\displaystyle\lqsumcost^i{(K^1,\dots,K^N)} +\rho \mathbb{E}_{x_0}\Big[ \sum_{t=0}^\infty {x_t^\top (K^i - \check{K}^i)^\top R^{i} (K^i - \check{K}^i) x_t}\Big].$} \nonumber
\end{align}
We denote $\hat{K}^*=[\hat{K}^{1*},\dots, \hat{K}^{N*}]$ as the Nash equilibrium solution for this \textit{guiding-policy-regularized LQ game}.
We show that the introduced regularization term can resolve the instability of the MA-PG dynamics~\eqref{eq:mapg dynamics}{\color{black}, thereby guaranteeing local exponential convergence to a stationary point of the regularized game}.

\begin{proposition}\label{prop: L2 stabilization}
    Under Assumption~\ref{alg:ma-gps}, let $K^*$ be a feedback Nash equilibrium policy, $\check{K}$ an arbitrary stabilizing feedback policy, and $\mathcal{K}$ the set of all stabilizing feedback policies. Then, we have
    \begin{enumerate}
        \item $w(K)$ is locally Lipschitz continuous on $\mathcal{K}$. There exists a radius $r$ such that $\forall K \in \mathcal{K}_r:= \{ K \in \mathcal{K} : \| K - K^* \|_2^2 \leq r \}$, the real parts of the eigenvalues of the Jacobian matrix $\nabla w(K)$ are lower-bounded by some~$\lambda_{\min}$.
        \item 
        As $\rho$ increases, the regularized Nash equilibrium $\hat{K}^*$ will exhibit locally stable policy gradient dynamics, with all eigenvalues of the regularized pseudo-gradient $(\nabla w(K) + \rho \cdot R)$ having positive real parts. 
    
    \item The bias of the regularized Nash equilibrium $\hat{K}^{i*}$ of each agent $i$ {is characterized by}
    \begin{align}
        \Big\|&\hat{K}^{i*}  - \left(\frac{\rho}{1+\rho}\check{K}^i + \frac{1}{1+\rho}K^{i*}\right)\Big\|_2 \label{eq:bias-bound} \\
        & \le \frac{1}{1 + \rho}\|{R^i}^{-1}\|_2 \|B^{i\top}\|_2 \big\|\hat{P}_{\hat{K}^*}^i \bar{A}_{\hat{K}^*}-P_{K^{*}}^{i}\bar{A}_{K^*}\big\|_2, \nonumber
    \end{align}
where $\bar{A}_{\hat{K}} = A - \sum_{i=1}^{N} B^i \hat{K}^i$, and $\hat{P}_{\hat{K}}^i$ is given as the unique positive definite solution of $\hat{P}_{\hat{K}}^i = {\bar{A}_{\hat{K}}}^{\top} \hat{P}_{\hat{K}}^i  \bar{A}_{\hat{K}} + \hat{K}^i {}^\top R^i \hat{K}^{i} + Q^i + \rho {(\hat{K}^i - \check{K}^i)}^\top R^i (\hat{K}^i - \check{K}^i)$.
    \end{enumerate} 
\end{proposition}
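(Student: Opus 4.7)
The plan is to prove the three parts of Proposition~1 independently, starting from Part 3 because it follows directly from a first-order optimality calculation, and then addressing Parts 1 and 2 via standard LQ regularity plus a spectral perturbation argument. For Part 3, I would substitute the linear feedback policies into the regularized cost in \eqref{eq:guided_lq_cost} and observe that the regularizer $\rho(K^i-\check{K}^i)^\top R^i (K^i-\check{K}^i)$ has exactly the same quadratic-in-state structure as the original control cost $K^{i\top} R^i K^i$. The derivation leading to \eqref{eq:derive_condition_K2} therefore extends almost verbatim, yielding the pseudo-gradient block
\[
\nabla_{K^i}\sumcostregularized^i \;=\; 2\bigl(R^i K^i - B^{i\top}\hat{P}^i_{\hat{K}}\bar{A}_{\hat{K}} + \rho R^i (K^i-\check{K}^i)\bigr)\Sigma_{\hat{K}}.
\]
Setting this to zero at $\hat{K}^*$ and cancelling $\Sigma_{\hat{K}^*}\succ 0$ (guaranteed by Assumption~\ref{assumption: infinite horizon LQ}) gives $(1+\rho)R^i\hat{K}^{i*} = B^{i\top}\hat{P}^i_{\hat{K}^*}\bar{A}_{\hat{K}^*} + \rho R^i \check{K}^i$. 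Comparing against the unregularized first-order condition $R^i K^{i*} = B^{i\top} P^i_{K^*}\bar{A}_{K^*}$, rearranging, and applying the triangle and submultiplicative inequalities delivers the bound \eqref{eq:bias-bound}.

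For Part 1, I would exploit that $\bar{A}_K$ is affine in $K$, so on the open set $\mathcal{K}$ of stabilizing policies both $P_K^i$ (via the Lyapunov equation \eqref{eq:P-bellman}) and $\Sigma_K = \sum_{t=0}^\infty \bar{A}_K^t \Sigma_0 (\bar{A}_K^\top)^t$ are real-analytic in $K$; this follows from the implicit function theorem applied to the discrete Lyapunov operator, whose Fréchet derivative is invertible whenever $\bar{A}_K$ is Schur. Therefore the map $w(K)$ in \eqref{eq:gradient_mapping} is $C^\infty$ on $\mathcal{K}$, in particular locally Lipschitz. Continuity of the spectrum in the matrix entries then implies that for every $\epsilon>0$ there exists a radius $r>0$ such that on $\mathcal{K}_r$ the real parts of the eigenvalues of $\nabla w(K)$ stay within an $\epsilon$-neighborhood of those of $\nabla w(K^*)$, giving the desired uniform lower bound $\lambda_{\min}$.

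For Part 2, the regularized pseudo-gradient adds the block $2\rho R^i(\hat{K}^i - \check{K}^i)\Sigma_{\hat{K}}$ to the unregularized one. Differentiating this addition with respect to $K^i$ produces, through vectorization, a positive definite leading term proportional to $\rho R^i \otimes \Sigma_{\hat{K}}^\top$, plus cross-terms of order $\rho\|\hat{K}^i-\check{K}^i\|$ that arise from the $K$-dependence of $\Sigma_K$. Combining with Part 3, which shows $\hat{K}^{i*}\to\check{K}^i$ as $\rho\to\infty$, these cross-terms vanish in the large-$\rho$ limit. A Bendixson--Hirsch style inequality then bounds the real parts of the eigenvalues of $\nabla w(\hat{K}^*) + \rho R$ from below by the minimum eigenvalue of its Hermitian part, which becomes strictly positive once $\rho$ is sufficiently large.

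The main obstacle will be making Part 2 fully rigorous: the statement writes the regularized Jacobian in the clean additive form $\nabla w(K)+\rho R$, but a careful derivation produces additional $\rho$-dependent terms through $\Sigma_K$ and through $\hat{K}^*$ itself. Controlling those terms requires simultaneously invoking the bias characterization in Part 3 and the smoothness bounds from Part 1 to show that the perturbation to the ideal shift $\rho R$ is $o(\rho)$ in operator norm, so that the positive definite contribution dominates and the spectrum lies in the open right half-plane.
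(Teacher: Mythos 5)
Your Parts 1 and 3 track the paper's own argument. Part 3 is exactly the paper's route: the stationarity condition of the regularized cost \eqref{eq:guided_lq_cost}, namely $(1+\rho)R^i\hat{K}^{i*}=B^{i\top}\hat{P}^i_{\hat{K}^*}\bar{A}_{\hat{K}^*}+\rho R^i\check{K}^i$ (after cancelling $\Sigma_{\hat{K}^*}\succ 0$), subtracted from the unregularized condition $R^iK^{i*}=B^{i\top}P^i_{K^*}\bar{A}_{K^*}$, followed by norm inequalities. Part 1 replaces the paper's citation of the smoothness result of Mazumdar et al.\ with a self-contained implicit-function-theorem argument for the Lyapunov equation plus spectral continuity; that is a valid and somewhat more explicit substitute.

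The genuine gap is in Part 2, and it is a circularity. You deduce $\hat{K}^{i*}\to\check{K}^i$ as $\rho\to\infty$ ``from Part 3,'' but the right-hand side of \eqref{eq:bias-bound} contains $\|\hat{P}^i_{\hat{K}^*}\bar{A}_{\hat{K}^*}-P^i_{K^*}\bar{A}_{K^*}\|_2$, and $\hat{P}^i_{\hat{K}^*}$ depends on $\rho$ both explicitly (its Bellman equation contains the penalty $\rho(\hat{K}^{i*}-\check{K}^i)^\top R^i(\hat{K}^{i*}-\check{K}^i)$) and implicitly through $\hat{K}^*$; a priori it could grow linearly in $\rho$, in which case \eqref{eq:bias-bound} yields nothing and your claim that the cross-terms are $o(\rho)$ collapses. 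This is precisely where the paper does additional work: it rewrites the regularized equilibrium conditions as $\hat{K}^*-\check{K}=G^{-1}H$, where $G$ contains the block $\rho\,\mathrm{blkdiag}(R^1,\dots,R^N)$ plus terms $B^{i\top}\hat{P}^{i*}B^{j}$ and $H$ involves $B^{i\top}\hat{P}^{i*}(A-\sum_j B^j\check{K}^j)$, and then argues that because $\check{K}$ is stabilizing the regularized costs, hence the matrices $\hat{P}^{i*}$, stay bounded uniformly in $\rho$, giving $\|\hat{K}^*-\check{K}\|_2=O(1/\rho)$. You need this $\rho$-uniform boundedness of $\hat{P}^{i*}$ (equivalently, of the regularized equilibrium value) as a separate ingredient before either the convergence $\hat{K}^*\to\check{K}$ or the control of the $\Sigma_K$-induced cross-terms can be asserted; as written, your Part 2 assumes what it needs to prove. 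Once that ingredient is supplied, your Bendixson/Hermitian-part argument is a reasonable, and in fact more explicit, way to finish than the paper's brief assertion that $\rho$ eventually dominates $\lambda_{\min}$, since adding $\rho R$ to a non-normal Jacobian does not shift eigenvalues entrywise.
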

\begin{proof}
    We include the proof in Appendix.
\end{proof}
\begin{figure*}[t!]
    \centering
    \includegraphics[width=1\linewidth]{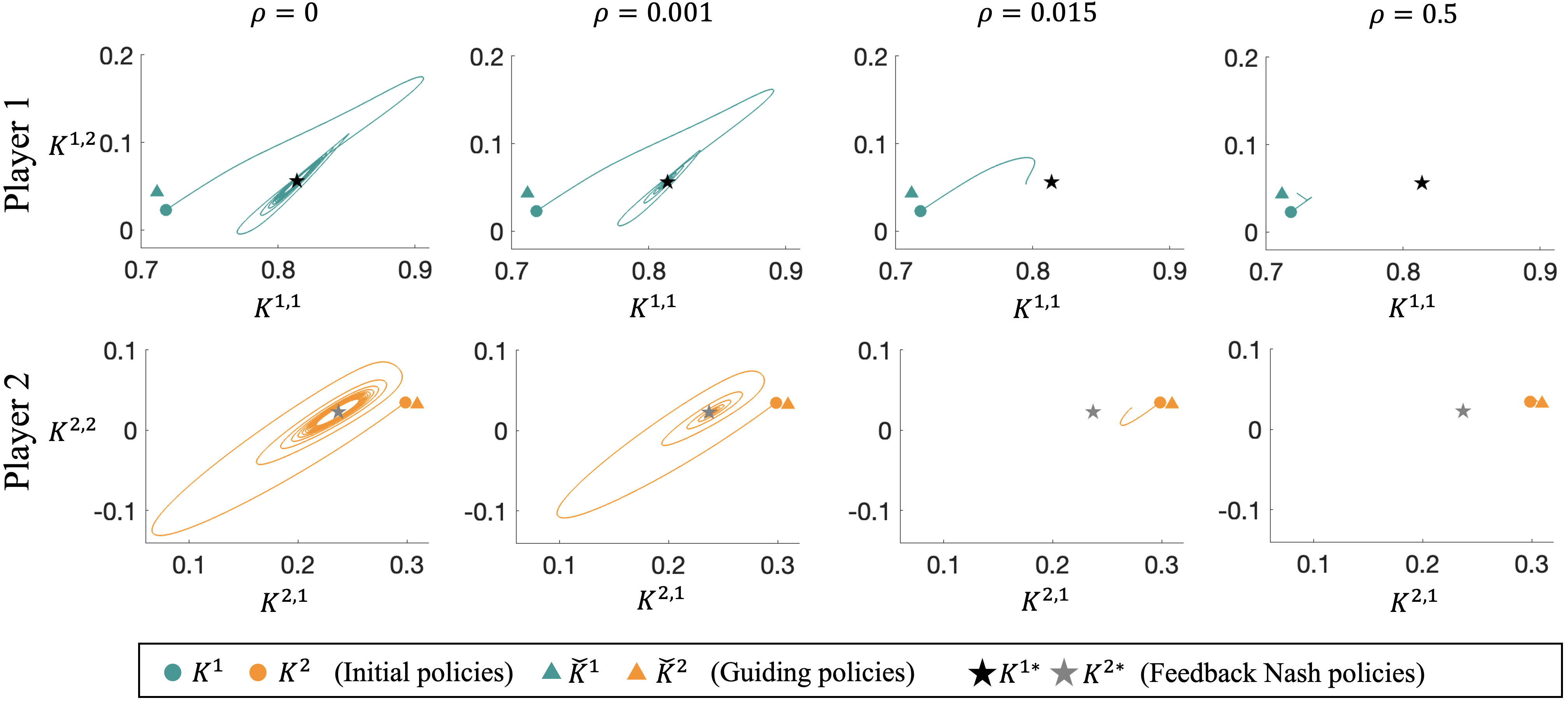}
    \vspace{-1.5em}
    \caption{\textbf{Policy guidance stabilizes infinite-horizon LQ game dynamics, enables convergence to a close neighborhood of the true Nash equilibrium—even if the guidance is wrong.
    }
    Each row corresponds to one of $K^{1}$ and $K^{2}$. In each plot, the two axes represent the entries $K^{i,1}$ and $K^{i,2}$ of the corresponding $K^{i}$, and the curves depict their trajectories under policy gradient with different values of $\rho$ and a biased guiding policy $\check{K}$ (potentially derived from an inaccurate dynamics model). 
    A small $\rho$ ($\ge 0.001$) is sufficient to stabilize the policy gradient dynamics near the ground truth feedback Nash equilibrium  $K^{*}$. However, as $\rho$ increases, the bias in the converged policies introduced by the guidance also grows. }
    \label{fig:lq-plot}
    \vspace{-1em}
\end{figure*}

Proposition \ref{prop: L2 stabilization} indicates that adding the guidance term from \eqref{eq:guided_lq_cost} with a sufficiently large $\rho$ stabilizes the policy gradient dynamics, guaranteeing the local exponential convergence of policy gradient methods to a stationary point even if the original policy gradient dynamics is unstable (i.e., $\lambda_{\min}<0$). Denote by $R\in \mathbb{R}^{m\times m}$ a diagonal block matrix where the $i$-th diagonal block is composed by $R^i$. This stability is achieved because the term $(\rho \cdot R)$ shifts all eigenvalues of the regularized pseudo-gradient's Jacobian $(\nabla w(K) + \rho \cdot R)$ to the positive domain. However, the distance (bias) between the converged solution and the true Nash equilibrium depends jointly on the problem dynamics, $\rho$, and the quality of the guidance policy. We include a numerical study in Figure~\ref{fig:lq-plot} to support this. 
Specifically, there are three cases: 
\begin{enumerate}
    \item When $\rho$ is small, we may not be able to stabilize the limit cycle behavior of the policy gradient dynamics if $(\nabla w(K) + \rho \cdot R)$ has an eigenvalue with a negative real part;
    \item Increasing $\rho$ to an appropriately large value, we enhance the stability of the policy gradient dynamics, and the converged solution under the guided policy gradient dynamics could be close to the Nash equilibrium of the original game;
    \item Further increasing $\rho$, we will enhance stability, but the converged solution may deviate from the Nash equilibrium of the original game and be biased more towards the guidance policy.
\end{enumerate}
These observations suggest that $\rho$ trades off stability and bias. {\color{black}Moreover,} if the guidance matches the true Nash equilibrium, additional policy gradient updates become unnecessary; otherwise, the converged solution under the guided policy gradient dynamics can outperform an imperfect guidance policy $ \check{K}$.

\section{Guiding Multi-Agent Policy Gradient in Nonlinear Games}
\label{section:main_method}
In this section, we extend the idea of leveraging a model-based prior to stabilize policy gradients in LQ games to the context of nonlinear games. We introduce an efficient multi-agent guided policy search algorithm that incorporates this model-based prior for nonlinear game scenarios.

\label{sec:iLQGames}

\subsection{Efficient computation of local guidance for MA-PG}
\label{sec:local guidance}
Unlike the LQ scenario, where a stabilizing policy can be easily computed as guidance, it is generally hard to obtain guidance for MA-PG for nonlinear games due to the complexity of dynamics. Inspired by the iLQGames method \cite{fridovich2020efficient}, which argues that along the Nash equilibrium state and control trajectories of a nonlinear game, the Nash equilibrium of the locally constructed LQ game along those state and control trajectories can approximate the Nash equilibrium of the original nonlinear game well. Building upon this idea, we construct local LQ game approximations along the neural network control output and the resulting state trajectory, and then use the Nash equilibrium of those local LQ game approximations as reference signal to guide the learning of multi-agent policy gradients. 

The key innovation of our approach is efficiently utilizing such a local guidance, rather than repeatedly computing iLQGames solutions until convergence at each sampled state. This guidance aggregates multi-agent gradient information, aligning policy updates across strategic agents toward a consistent Nash equilibrium and thereby stabilizing and accelerating MA-PG in settings with multiple equilibria arising from complex objectives.

To achieve this, we introduce the time-varying finite-horizon linear quadratic games, with a given initial state $\nominalstate_0$ and $i\in[N]$:\vspace{-0.5em}
\begin{equation}
\begin{aligned}\label{eq:finite horizon LQ games}
    \min_{\{x_t\}_{t=0}^{T+1},\{\control_\stage^i\}_{t=0}^{T}} \ & \sum_{\stage=0}^T \state_\stage^\top \statecostmatrix_\stage^i \state_\stage  + 2\statecostlinearterm_t^{i\top} \state_\stage +  \control_\stage^{i\top} \controlcostmatrix_\stage^i \control_\stage^i + 2\controlcostlinearterm_\stage^{i\top} \control_\stage^i \\
    \textrm{s.t. }& \state_{\stage+1} = \statematrix_\stage \state_\stage + \sum_{i=1}^N \controlmatrix_\stage^i \control_\stage^i 
    ,\ \state_0 = \nominalstate_0, \\
    & \forall t=0,\dots,T
\end{aligned}
\end{equation}
The horizon length $T$ is user-defined and can be relatively short if computational resources are limited. In our experiments (Section~\ref{sec:results}), we find $T=10$ or $20$ sufficient for generating effective guidance, though the optimal choice depends on game-specific dynamics and costs. We propose training a neural network policy \( \NNpolicy \) for nonlinear games via model-based guided policy gradients. Below, we describe the procedure for computing the reference trajectory used to obtain policy guidance. Consider a $T$-horizon nominal state trajectory $\{\nominalstate_\stage\}_{\stage=0}^{T+1}$ steered by the \emph{neural network control} $\{\nominalcontrol_\stage\}_{\stage=0}^T$, with $\nominalcontrol_\stage  \sim \NNpolicy(\nominalstate_\stage)$, $\nominalstate_{\stage+1} = \dyn(\nominalstate_\stage, \nominalcontrol_\stage^1,\dots,\nominalcontrol_\stage^N), \ \ \forall t = 0,\dots, T$.
We linearize the dynamics and quadraticize the cost to derive a local LQ game approximation around the nominal trajectory $\{\nominalstate_0,\nominalcontrol_0,\cdots,  \nominalstate_{T+1}\}$, for all $i\in[N]$,
\begin{equation*}{
    \begin{aligned}
    &\statematrix_\stage : =  \nabla_{\nominalstate_\stage} \dyn (\nominalstate_\stage, \nominalcontrol_\stage), 
    &&\controlmatrix_\stage^i : = \nabla_{\nominalcontrol^i_\stage} \dyn(\nominalstate_\stage, \nominalcontrol_\stage), \\ 
    &\statecostmatrix_\stage^i: =  \nabla^2_{\nominalstate_\stage} \stagecost^i(\nominalstate_\stage, \nominalcontrol_\stage), 
    &&q_\stage^i: = \nabla_{\nominalstate_\stage} \stagecost^i(\nominalstate_\stage,\nominalcontrol_\stage), \\
    & \controlcostmatrix_\stage^i:=  \nabla^2_{\nominalcontrol^i_\stage} \stagecost^i(\nominalstate_\stage, \nominalcontrol_\stage),  &&r_\stage^i : = \nabla_{\nominalcontrol_\stage^i} \stagecost^i(\nominalstate_\stage,\nominalcontrol_\stage).
    \end{aligned}}
\end{equation*}


Rather than running iLQGames to convergence for every sampled state $\bar{x}_0$ during training, our method directly uses the \emph{first} element of the Nash equilibrium control sequence $\{u_t^*\}_{t=0}^T$ from the local LQ approximation \eqref{eq:finite horizon LQ games} as the guidance control for $\nominalstate_0$. Since the Nash equilibrium control $\{u_\stage^*\}_{t=0}^T$ of local LQ game \eqref{eq:finite horizon LQ games} is defined in the local coordinate around the nominal control trajectory $\{\nominalcontrol_\stage\}_{\stage=0}^T$, the guidance control $\check{u}_0$ of $\nominalstate_0$, which is defined in the global coordinate, should be defined as $ \check{u}_0 :=\nominalcontrol_0 + u_0^*$. For agent $i\in[N]$, we define the mapping from $(\nominalstate_0, \NNpolicy)$ to the guidance control $\check{u}_0^i$ as the guidance policy $\check{\policy}^i: (\nominalstate_0, \NNpolicy)\to \check{u}^i_0$. With the model-based prior $\check{\pi}^i$, we introduce the guided policy stage cost of agent~$i$:\vspace{-0.3em}
\begin{equation}\label{eq:modified reward}
    \hat{\stagecost}_{\rho}^i (\state, u^1,\dots,u^N) := \stagecost^i(\state,u^1,\cdots, u^N) + \rho \|u^i - \check{\policy}^i(x, \NNpolicy)\|_2^2\vspace{-0.3em}
\end{equation}
where $\rho>0$ is the guidance weight. 

\begin{remark}[On the uniqueness of Nash equilibria]
When multiple Nash equilibria exist in local LQ games, enumerating all Nash equilibria is unnecessary and computationally excessive. Instead, we adopt the unique \emph{minimum L2-norm Nash equilibrium} \cite{laine2023computation} as synchronized guidance across agents. Proposition~\ref{prop: L2 stabilization} shows that such synchronized guidance can effectively stabilize MA-PG dynamics. Empirically, we find that this guidance enables our method to outperform prior MARL approaches and converge faster in these settings, as shown in Section~\ref{sec:results}.
\end{remark}

\begin{algorithm}[t!]
\caption{Multi-agent guided policy search}
\label{alg:ma-gps}
\vspace{0.5em}
\KwIn{Neural network policies of each player $\{\pi_\theta^i\}_{i=1}^N$ and the neural network value functions $\{V_{\theta'}^{i,\pi_\theta}\}_{i=1}^N$, 
learning rate $\eta$, list of guidance weights $\{\rho_{i}\}_{i=1}^{N_e}$, 
number of epochs $N_e$, number of gradient steps $N_g$, number of sampled rollout trajectories $M_r$, batch size $M_b$, rollout trajectory length $T_r$} 
\KwOut{Optimized policies $\{\pi_\theta^i\}_{i=1}^N$}

\hspace*{-0.5em} Set data buffer $D = \{\}$\;
\For{$k \gets 1$ \KwTo $N_e$}{
    \For{$j \gets 1$ \KwTo $N_g$}{
        Sample $M_r$ initial conditions $\{x_0^{(j)}\}_{j=1}^{M_r}$. \\
        For each $j\in\{1,\dots, M_r\}$, 
        simulate a trajectory $\xi^{(j)}=\{x_{0}^{(j)}\!, u_0^{(j)}\!,
        \dots, x_{T_r+1}^{(j)}\}$, under the policies $\{\pi_\theta^i\}_{i=1}^N$. \\
        Update data buffer, $D\gets \{\xi^{(j)}\}_{j=1}^{M_r}$\\
        \For{each agent $i\in[N]$}{
            Sampling a batch of states $\{x^{(j)}\}_{j=1}^{M_b}$\\
            Optimizing policy: $\theta \gets \theta - \stepsize \nabla_{\theta}  \sum_{j=1}^{M_b} \Big[\hat{\stagecost}_{\rho_k}^i \big(x^{(j) },\pi_\theta(x^{(j)})\big)+\gamma V_{\theta'}^{i,\pi_\theta}\Big(f\big(x^{(j) },\pi_\theta(x^{(j)})\big)\Big)\Big] $ \;
            Optimizing value function via minimizing Bellman error \cite{sutton2018reinforcement}:
            $\theta' \gets \theta' - \stepsize \nabla_{\theta'} \sum_{j=1}^{M_b} \Big\| V_{\theta'}^{i,\pi_\theta}(x^{(j)}) -  \hat{\stagecost}_{\rho_k}^i\big(x^{(j)}, \pi_\theta(x^{(j)})\big)-\gamma V_{\theta'}^{i,\pi_\theta}\Big(f\big(x^{(j)}, \pi_\theta(x^{(j)})\big)\Big)  \Big\|_2 $
        }
    }
}
\end{algorithm}

\subsection{Multi-agent guided policy search for nonlinear non-cooperative dynamic games}
\label{sec:ma-gps}

Leveraging the policy guided cost function \eqref{eq:modified reward}, we introduce our multi-agent guided policy search (MA-GPS) method in Algorithm~\ref{alg:ma-gps}. 
At each iteration of policy gradient, we first sample a number of initial states, simulate nominal trajectories and store them into the data buffer. Subsequently, for efficiently training, we sample a small batch of data from the data buffer, compute their guidance control, and then take one step policy gradient descent for each agent to improve the policy and value function estimations. We iterate until convergence. In practice, we could decay the guidance weight $\rho$ periodically. This allows MA-GPS to tolerate imperfect guidance and unleash the computational power of MA-PG to refine the sub-optimal policy. 

\section{Experiments}
\label{sec:results}
We evaluate our method against MA-PG methods in three examples: \textbf{1)} a two-agent linear quadratic (LQ) game, \textbf{2)} a platooning scenario involving three vehicles merging into a single lane, and \textbf{3)} a 24-dimensional six-player basketball formation, where agents strategically move to desired positions based on their roles. Although the last two examples fall outside Assumption 1, our method remains effective, suggesting that the stability analysis of Theorem 1 may extend beyond the LQ setting. 
Implementation details\footnote{Code is available at \url{https://github.com/qugechen/MA-GPS}} of experiments are included in Appendix.

In our deep RL experiment, we compare our method with IPPO~\cite{de2020independent}, IPPO with L2 regularization (equivalent to zero-input guidance for pure variance reduction), MA-PPO~\cite{yu2022surprising} and MA-DDPG~\cite{lowe2017multi}. Our IPPO implementation is adapted for non-cooperative dynamic games, where each agent learns its own value function and policy based on its individual objective, unlike the cooperative setting in \cite{de2020independent}. Concretely, each agent is equipped with a value and a policy network, both updated via simultaneous gradient play, as outlined in \eqref{eq:simultaneous gradient play}. IPPO and MA-PPO both employ entropy-driven exploration, making them competitive baselines, whereas MA-DDPG learns deterministic policies without it.

\begin{figure*}[t!]
    \centering
    \includegraphics[width=\textwidth]{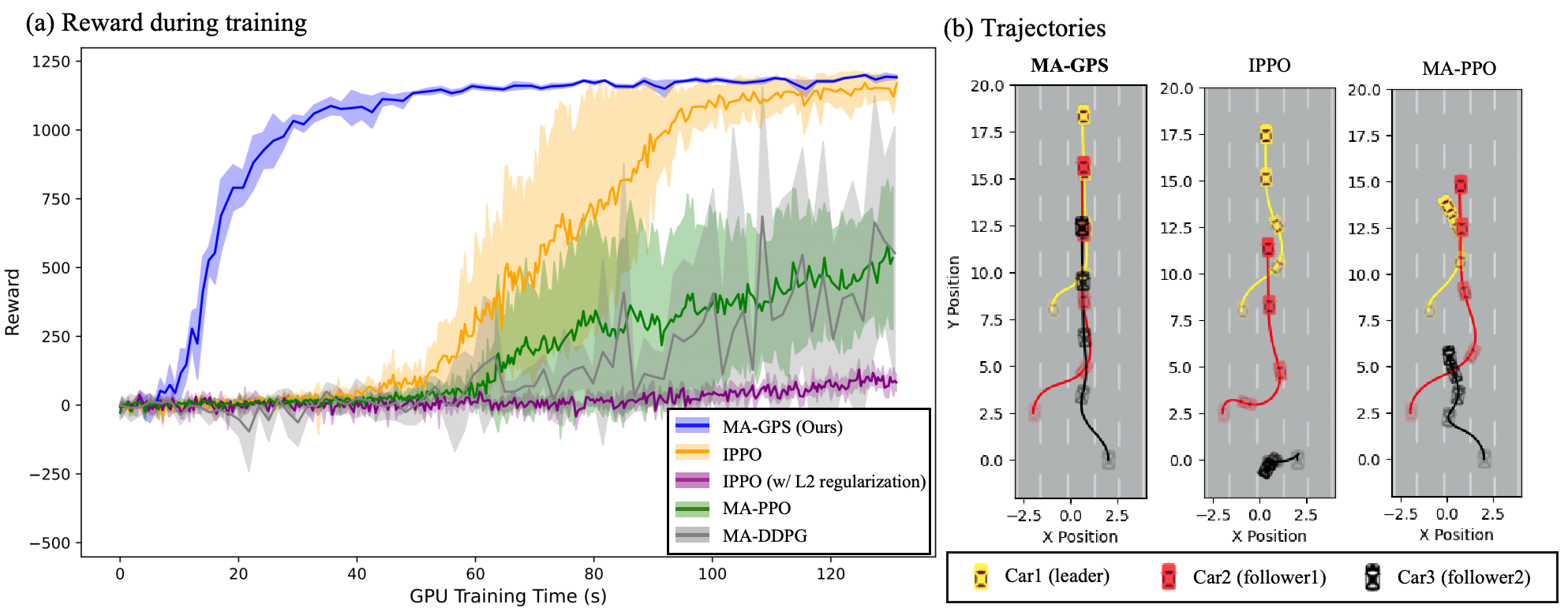}
    \vspace{-1em}
    \caption{\textbf{Three-vehicle platooning experiment.} \textbf{(a) Total reward during GPU training time.} Note that the training time includes the computation time of the LQ game solutions used to guide the MA-GPS. MA-GPS achieves the highest reward and lowest variance compared to the other four methods. \textbf{(b) Trajectories of the vehicles.} The leader vehicle guides the other vehicles to the center lane. MA-GPS effectively enables merging into a lane, while IPPO’s Car 3 gets stuck within the same GPU computation time.}\vspace{-1em}
    \label{fig:three-vehicle}
\end{figure*}
\begin{figure*}[t!]
    \centering
    \includegraphics[width=\textwidth]{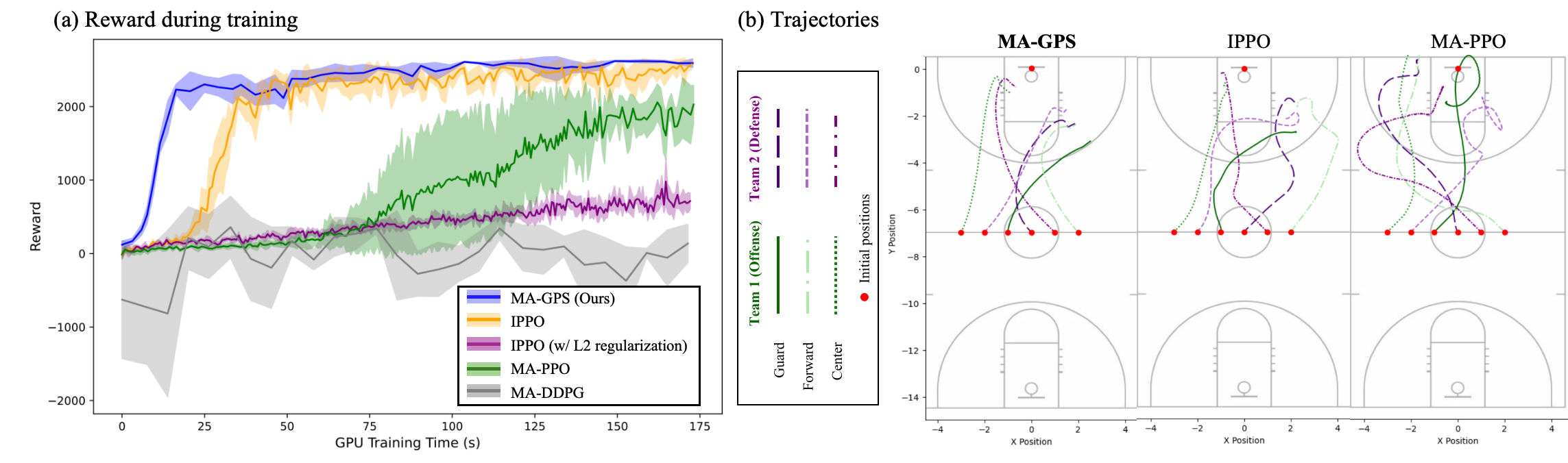}
    \caption{\textbf{Six-player strategic basketball formation experiment. (a) Total reward during GPU training time.} 
MA-GPS achieves a higher-performing policy more quickly than IPPO, MA-DDPG, and MA-PPO, with lower variance when measured by GPU training time. In contrast, naive L2 policy regularization reduces variance but slows IPPO’s convergence. These results suggest that the local LQ game approximations introduced in Section~\ref{sec:local guidance} can be computed efficiently, thereby accelerating MARL policy convergence.
\textbf{(b) Trajectories of Players.} MA-GPS learns more effective policies than IPPO and MA-PPO within the same GPU computation time, as evidenced by the distinct formation of agents. \vspace{-1em}}
    \label{fig:basketball_all_plot}
\end{figure*}

\noindent\textbf{1) LQ game}. As noted in \cite{mazumdar2020policy}, MA-PG methods in LQ games exhibit limit-cycle behaviors and fail to converge to Nash equilibria when the Jacobian of $w(K)$ in \eqref{eq:gradient_mapping} has both positive and negative eigenvalues. We reproduce this phenomenon in an LQ game example, detailed in the Appendix, and analyze the stabilizing effect of the proposed guidance regularization term. We solve for the ground truth Nash equilibrium policies $K^{1*}, K^{2*}$ via the Lyapunov iteration from \cite{mazumdar2020policy}. We introduce an artificial bias to the guiding policies $\check{K}^{1}, \check{K}^{2}$ in the regularized cost \eqref{eq:guided_lq_cost}. The resulting policy trajectories under varying regularization weights $\rho$ are shown in Figure~\ref{fig:lq-plot}.

Without guidance ($\rho=0$), policy gradients exhibit limit-cycle behavior and fail to converge. Introducing a small regularization weight $\rho$ effectively shifts all eigenvalues of $\nabla w(K)$ to positive real parts, stabilizing policy gradient dynamics, thus empirically validating Proposition~\ref{prop: L2 stabilization}. As illustrated in Figure \ref{fig:lq-plot}, both agents' policies converge to a neighborhood around their true Nash equilibria, despite imperfect guidance. Increasing $\rho$, however, enlarges the bias in converged policies relative to the ground truth. Hence, the optimal regularization weight $\rho$ is the smallest sufficient to stabilize convergence, minimizing the guiding policy bias.

\noindent\textbf{2) Three-vehicle platooning}. 
We evaluate MA-GPS in a nonlinear dynamic setting.
Initially positioned in separate highway lanes, Car 1 guides Cars 2 and 3 to a target lane. 
Figure~\ref{fig:three-vehicle} compares MA-GPS with state-of-the-art MARL methods. For fairness, training progress is reported against GPU time. Since exact Nash equilibria are difficult to compute in nonlinear games, policy quality is assessed through total rewards and trajectory simulations. MA-GPS consistently accelerates convergence and reduces variance during training, achieving variance reduction comparable to L2 regularization but with minimal bias, and learning high-performance policies significantly faster than baselines.

\noindent\textbf{3) Six-player strategic basketball formation.}
We demonstrate that MA-GPS scales to larger numbers of agents and converges faster to high-performing policies than IPPO without guidance, measured in training time. To illustrate, we simulate a simplified basketball formation where both offense and defense teams consist of three agents: a shooting guard, a center, and a forward. Defensive agents aim to stay between their assigned opponent and the basket. On offense, the center positions near the basket ($r=1.5$), the shooting guard holds the three-point line ($r=4$) to prepare for shots, and the forward sets a screen by moving toward the defender covering the shooting guard.

We note that some agents’ objectives include mild non-differentiabilities. For instance, a player’s objective may contain a term like $\max(p_y,0)$, where $p_y$ denotes that player’s y-position. This one-sided penalty introduces a kink at the baseline ($p_y=0$), discouraging the player from crossing it. Our experiments indicate that MA-GPS can accommodate such objectives beyond perfectly smooth settings, thereby broadening its applicability to a wider range of multi-agent tasks.

As shown in Figure~\ref{fig:basketball_all_plot}, MA-GPS noticeably increases the convergence rate and reduces variance compared to other four methods, {\color{black}particularly during the early stages of training, highlighting the stabilizing effect of the proposed model-based guidance}. As observed in the trajectory plot, as the number of agents increases, it becomes increasingly difficult for all agents to explore and find an effective policy together. Our guidance provides an effective prior for all agents, helping them converge to a reasonable equilibrium.


\begin{remark}[Comparing with finite-horizon iLQGames]
We measure wall-clock computation time $\tau$ for iLQGames with horizon length $T$ on two tasks: three-vehicle platooning (\(T=20, \tau=0.0485\pm0.0042\,\mathrm{s}\)) and six-player basketball formation (\(T=100, \tau=0.64\pm 0.15\,\mathrm{s}\)). Because iLQGames uses finite-horizon formulations, its computational time scales quickly with horizon length, player count, and objective complexity. In contrast, MA-GPS shifts computation to offline training and directly solves the infinite-horizon case, producing real-time executable neural network policies ($\tau = 0.0016\pm 0.0012\,\mathrm{s}$) for long-horizon strategic decision-making.
\end{remark}

\section{Conclusion}
This paper introduced a multi-agent policy gradient method enhanced by model-based guidance, which stabilizes learning dynamics and accelerates convergence in non-cooperative dynamic games. The proposed approach mitigates the unstable limit-cycle behaviors that often arise in multi-agent policy gradient methods, thereby enabling the efficient learning of complex Nash equilibrium strategies.

Future work includes relaxing the assumption of an approximate model by learning dynamics models online and developing tighter theoretical bounds on the gap between guided and true Nash equilibria in LQ settings to better understand the impact of imperfect guidance on solution bias. Another direction is to unify MARL stabilization techniques, such as entropy-driven exploration, with model-based guidance to characterize how the balance between exploration and exploitation of game-theoretic structure affects the learning of Nash equilibrium policies.


\bibliographystyle{IEEEtran} 
\bibliography{reference}

@book{bacsar1998dynamic,
  title={Dynamic noncooperative game theory},
  author={Ba{\c{s}}ar, Tamer and Olsder, Geert Jan},
  year={1998},
  publisher={SIAM}
}

@article{sokota2022unified,
  title={A unified approach to reinforcement learning, quantal response equilibria, and two-player zero-sum games},
  author={Sokota, Samuel and D'Orazio, Ryan and Kolter, J Zico and Loizou, Nicolas and Lanctot, Marc and Mitliagkas, Ioannis and Brown, Noam and Kroer, Christian},
  journal={arXiv preprint arXiv:2206.05825},
  year={2022}
}

@article{li2025multi,
  title={Multi-Agent Guided Policy Optimization},
  author={Li, Yueheng and Xie, Guangming and Lu, Zongqing},
  journal={arXiv preprint arXiv:2507.18059},
  year={2025}
}

@article{lowe2017multi,
  title={Multi-agent actor-critic for mixed cooperative-competitive environments},
  author={Lowe, Ryan and Wu, Yi I and Tamar, Aviv and Harb, Jean and Pieter Abbeel, OpenAI and Mordatch, Igor},
  journal={Advances in neural information processing systems},
  volume={30},
  year={2017}
}

@article{de2020independent,
  title={Is independent learning all you need in the starcraft multi-agent challenge?},
  author={De Witt, Christian Schroeder and Gupta, Tarun and Makoviichuk, Denys and Makoviychuk, Viktor and Torr, Philip HS and Sun, Mingfei and Whiteson, Shimon},
  journal={arXiv preprint arXiv:2011.09533},
  year={2020}
}

@article{daskalakis2009complexity,
  title={The complexity of computing a Nash equilibrium},
  author={Daskalakis, Constantinos and Goldberg, Paul W and Papadimitriou, Christos H},
  journal={Communications of the ACM},
  volume={52},
  number={2},
  pages={89--97},
  year={2009},
  publisher={ACM New York, NY, USA}
}

@book{rudin1987real,
  title={Real and complex analysis},
  author={Rudin, Walter},
  year={1987},
  publisher={McGraw-Hill, Inc.}
}

@article{yu2021optimizing,
  title={Optimizing task scheduling in human-robot collaboration with deep multi-agent reinforcement learning},
  author={Yu, Tian and Huang, Jing and Chang, Qing},
  journal={Journal of manufacturing systems},
  volume={60},
  pages={487--499},
  year={2021},
  publisher={Elsevier}
}

@article{maravall2013coordination,
  title={Coordination of communication in robot teams by reinforcement learning},
  author={Maravall, Dar{\'\i}o and de Lope, Javier and Dom{\'\i}nguez, Ra{\'u}l},
  journal={Robotics and Autonomous Systems},
  volume={61},
  number={7},
  pages={661--666},
  year={2013},
  publisher={Elsevier}
}

@article{zhou2022multi,
  title={Multi-agent reinforcement learning for cooperative lane changing of connected and autonomous vehicles in mixed traffic},
  author={Zhou, Wei and Chen, Dong and Yan, Jun and Li, Zhaojian and Yin, Huilin and Ge, Wanchen},
  journal={Autonomous Intelligent Systems},
  volume={2},
  number={1},
  pages={5},
  year={2022},
  publisher={Springer}
}

@article{konda1999actor,
  title={Actor-critic algorithms},
  author={Konda, Vijay and Tsitsiklis, John},
  journal={Advances in neural information processing systems},
  volume={12},
  year={1999}
}

@article{ackermann2019reducing,
  title={Reducing overestimation bias in multi-agent domains using double centralized critics},
  author={Ackermann, Johannes and Gabler, Volker and Osa, Takayuki and Sugiyama, Masashi},
  journal={arXiv preprint arXiv:1910.01465},
  year={2019}
}

@article{yu2022surprising,
  title={The surprising effectiveness of ppo in cooperative multi-agent games},
  author={Yu, Chao and Velu, Akash and Vinitsky, Eugene and Gao, Jiaxuan and Wang, Yu and Bayen, Alexandre and Wu, Yi},
  journal={Advances in Neural Information Processing Systems},
  volume={35},
  pages={24611--24624},
  year={2022}
}

@article{rashid2020monotonic,
  title={Monotonic value function factorisation for deep multi-agent reinforcement learning},
  author={Rashid, Tabish and Samvelyan, Mikayel and De Witt, Christian Schroeder and Farquhar, Gregory and Foerster, Jakob and Whiteson, Shimon},
  journal={Journal of Machine Learning Research},
  volume={21},
  number={178},
  pages={1--51},
  year={2020}
}

@article{aggarwal2024policy,
  title={Policy Optimization finds Nash Equilibrium in Regularized General-Sum LQ Games},
  author={Aggarwal, Shubham and Bastopcu, Melih and Ba{\c{s}}ar, Tamer and others},
  journal={arXiv preprint arXiv:2404.00045},
  year={2024}
}

@inproceedings{sunehag2018value,
  title={Value-Decomposition Networks For Cooperative Multi-Agent Learning Based On Team Reward},
  author={Sunehag, Peter and Lever, Guy and Gruslys, Audrunas and Czarnecki, Wojciech Marian and Zambaldi, Vinicius and Jaderberg, Max and Lanctot, Marc and Sonnerat, Nicolas and Leibo, Joel Z and Tuyls, Karl and others},
  booktitle={Proceedings of the 17th International Conference on Autonomous Agents and MultiAgent Systems},
  pages={2085--2087},
  year={2018}
}

@article{tampuu2017multiagent,
  title={Multiagent cooperation and competition with deep reinforcement learning},
  author={Tampuu, Ardi and Matiisen, Tambet and Kodelja, Dorian and Kuzovkin, Ilya and Korjus, Kristjan and Aru, Juhan and Aru, Jaan and Vicente, Raul},
  journal={PloS one},
  volume={12},
  number={4},
  pages={},
  year={2017},
  publisher={Public Library of Science San Francisco, CA USA}
}

@inproceedings{nayak2023scalable,
  title={Scalable multi-agent reinforcement learning through intelligent information aggregation},
  author={Nayak, Siddharth and Choi, Kenneth and Ding, Wenqi and Dolan, Sydney and Gopalakrishnan, Karthik and Balakrishnan, Hamsa},
  booktitle={International Conference on Machine Learning},
  pages={25817--25833},
  year={2023},
  organization={PMLR}
}

@article{lidard2024blending,
  title={Blending Data-Driven Priors in Dynamic Games},
  author={Lidard, Justin and Hu, Haimin and Hancock, Asher and Zhang, Zixu and Contreras, Albert Gim{\'o} and Modi, Vikash and DeCastro, Jonathan A and Gopinath, Deepak E and Rosman, Guy and Leonard, Naomi Ehrich and others},
  journal={CoRR},
  year={2024}
}

@INPROCEEDINGS{di2019ddp,
  author={Di, Bolei and Lamperski, Andrew},
  booktitle={2019 IEEE 58th Conference on Decision and Control (CDC)}, 
  title={Newton’s Method and Differential Dynamic Programming for Unconstrained Nonlinear Dynamic Games}, 
  year={2019},
  volume={},
  number={},
  pages={4073-4078},
  keywords={Games;Newton method;Heuristic algorithms;Nonlinear dynamical systems;Convergence;Dynamic programming;Optimal control},
  doi={10.1109/CDC40024.2019.9029237}}

@inproceedings{perolat2015approximate,
  title={Approximate dynamic programming for two-player zero-sum Markov games},
  author={Perolat, Julien and Scherrer, Bruno and Piot, Bilal and Pietquin, Olivier},
  booktitle={International Conference on Machine Learning},
  pages={1321--1329},
  year={2015},
  organization={PMLR}
}

@article{kossioris2008feedback,
  title={Feedback Nash equilibria for non-linear differential games in pollution control},
  author={Kossioris, Georgios and Plexousakis, Michael and Xepapadeas, Anastasios and de Zeeuw, Aart and M{\"a}ler, K-G},
  journal={Journal of Economic Dynamics and Control},
  volume={32},
  number={4},
  pages={1312--1331},
  year={2008},
  publisher={Elsevier}
}

@article{facchinei2009generalized,
  title={Generalized Nash equilibrium problems and Newton methods},
  author={Facchinei, Francisco and Fischer, Andreas and Piccialli, Veronica},
  journal={Mathematical Programming},
  volume={117},
  number={1},
  pages={163--194},
  year={2009},
  publisher={Springer}
}

@inproceedings{mazumdar2020policy,
  title={Policy-Gradient Algorithms Have No Guarantees of Convergence in Linear Quadratic Games},
  author={Mazumdar, Eric and Ratliff, Lillian J and Jordan, Michael I and Sastry, S Shankar},
  booktitle={Proceedings of the 19th International Conference on Autonomous Agents and MultiAgent Systems},
  pages={860--868},
  year={2020}
}

@article{epelman2011sampled,
  title={Sampled fictitious play for approximate dynamic programming},
  author={Epelman, Marina and Ghate, Archis and Smith, Robert L},
  journal={Computers \& Operations Research},
  volume={38},
  number={12},
  pages={1705--1718},
  year={2011},
  publisher={Elsevier}
}

@ARTICLE{jiang2021mrgps,
  author={Jiang, Chao and Guo, Yi},
  journal={IEEE Control Systems Letters}, 
  title={Multi-Robot Guided Policy Search for Learning Decentralized Swarm Control}, 
  year={2021},
  volume={5},
  number={3},
  pages={743-748},
  keywords={Trajectory optimization;Training;Robot kinematics;Decentralized control;Multi-robot learning;distributed trajectory optimization;guided policy search;robotic swarm},
  doi={10.1109/LCSYS.2020.3005441}}

@article{claus1998dynamics,
  title={The dynamics of reinforcement learning in cooperative multiagent systems},
  author={Claus, Caroline and Boutilier, Craig},
  journal={AAAI/IAAI},
  volume={1998},
  number={746-752},
  pages={2},
  year={1998}
}

@article{oroojlooy2023review,
  title={A review of cooperative multi-agent deep reinforcement learning},
  author={Oroojlooy, Afshin and Hajinezhad, Davood},
  journal={Applied Intelligence},
  volume={53},
  number={11},
  pages={13677--13722},
  year={2023},
  publisher={Springer}
}

@inproceedings{tan1993multi,
  title={Multi-agent reinforcement learning: Independent vs. cooperative agents},
  author={Tan, Ming},
  booktitle={Proceedings of the tenth international conference on machine learning},
  pages={330--337},
  year={1993}
}

@inproceedings{mehta2023effects,
  title={Effects of spectral normalization in multi-agent reinforcement learning},
  author={Mehta, Kinal and Mahajan, Anuj and Kumar, Pawan},
  booktitle={2023 International Joint Conference on Neural Networks (IJCNN)},
  pages={1--8},
  year={2023},
  organization={IEEE}
}

@article{laine2023computation,
  title={The computation of approximate generalized feedback nash equilibria},
  author={Laine, Forrest and Fridovich-Keil, David and Chiu, Chih-Yuan and Tomlin, Claire},
  journal={SIAM Journal on Optimization},
  volume={33},
  number={1},
  pages={294--318},
  year={2023},
  publisher={SIAM}
}

@inproceedings{fridovich2020efficient,
  title={Efficient iterative linear-quadratic approximations for nonlinear multi-player general-sum differential games},
  author={Fridovich-Keil, David and Ratner, Ellis and Peters, Lasse and Dragan, Anca D and Tomlin, Claire J},
  booktitle={2020 IEEE international conference on robotics and automation (ICRA)},
  pages={1475--1481},
  year={2020},
  organization={IEEE}
}

@article{hambly2023policy,
  title={Policy gradient methods find the nash equilibrium in n-player general-sum linear-quadratic games},
  author={Hambly, Ben and Xu, Renyuan and Yang, Huining},
  journal={Journal of Machine Learning Research},
  volume={24},
  number={139},
  pages={1--56},
  year={2023}
}

@article{hosseinirad2023policy,
  title={Policy Gradient Convergence in Potential Linear-Quadratic Games with Decoupled Dynamics},
  author={Hosseinirad, Sara and Porzani, Alireza Alian and Salizzoni, Giulio and Kamgarpour, Maryam},
  journal={arXiv preprint arXiv:2305.13476},
  year={2023}
}

@article{zhang2019policy,
  title={Policy optimization provably converges to Nash equilibria in zero-sum linear quadratic games},
  author={Zhang, Kaiqing and Yang, Zhuoran and Basar, Tamer},
  journal={Advances in Neural Information Processing Systems},
  volume={32},
  year={2019}
}

@article{daskalakis2020independent,
  title={Independent policy gradient methods for competitive reinforcement learning},
  author={Daskalakis, Constantinos and Foster, Dylan J and Golowich, Noah},
  journal={Advances in neural information processing systems},
  volume={33},
  pages={5527--5540},
  year={2020}
}

@inproceedings{cheng2019control,
  title={Control regularization for reduced variance reinforcement learning},
  author={Cheng, Richard and Verma, Abhinav and Orosz, Gabor and Chaudhuri, Swarat and Yue, Yisong and Burdick, Joel},
  booktitle={International Conference on Machine Learning},
  pages={1141--1150},
  year={2019},
  organization={PMLR}
}

@article{kuba2021settling,
  title={Settling the variance of multi-agent policy gradients},
  author={Kuba, Jakub Grudzien and Wen, Muning and Meng, Linghui and Zhang, Haifeng and Mguni, David and Wang, Jun and Yang, Yaodong and others},
  journal={Advances in Neural Information Processing Systems},
  volume={34},
  pages={13458--13470},
  year={2021}
}

@book{sutton2018reinforcement,
  title={Reinforcement learning: An introduction},
  author={Sutton, Richard S and Barto, Andrew G},
  year={2018},
  publisher={MIT press}
}

@inproceedings{gu2022q,
  title={Q-Prop: Sample-Efficient Policy Gradient with An Off-Policy Critic},
  author={Gu, Shixiang and Lillicrap, Timothy and Ghahramani, Zoubin and Turner, Richard E and Levine, Sergey},
  booktitle={International Conference on Learning Representations},
  year={2022}
}

@inproceedings{thomas2014bias,
  title={Bias in natural actor-critic algorithms},
  author={Thomas, Philip},
  booktitle={International conference on machine learning},
  pages={441--448},
  year={2014},
  organization={PMLR}
}

@article{schulman2015high,
  title={High-dimensional continuous control using generalized advantage estimation},
  author={Schulman, John and Moritz, Philipp and Levine, Sergey and Jordan, Michael and Abbeel, Pieter},
  journal={arXiv preprint arXiv:1506.02438},
  year={2015}
}

@inproceedings{levine2013guided,
  title={Guided policy search},
  author={Levine, Sergey and Koltun, Vladlen},
  booktitle={International conference on machine learning},
  pages={1--9},
  year={2013},
  organization={PMLR}
}

\section{Appendix}
\label{sec:appendix}



\begin{lemma}\cite[Theorem 3]{mazumdar2020policy} \label{lemma:lq-nash-equation}
Consider a stabilizing policy $K = (K^1, \cdots, K^N)$.
Then, $w(K) = 0$ if and only if $K$ is a Nash equilibrium. Moreover, the Nash equilibrium $K^{i*}$ satisfies $R^i K^{i*} = B^{i\top} P_{K^{*}}^{i} \bar{A}_{K^*}.$
\end{lemma}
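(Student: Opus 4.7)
\medskip

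\noindent\emph{Proof plan.} The statement has two parts: (i) the equivalence $w(K)=0 \Leftrightarrow K$ is a Nash equilibrium; and (ii) the algebraic characterization $R^i K^{i*} = B^{i\top} P^{i}_{K^*}\bar{A}_{K^*}$. The plan is to derive both by combining the closed-form expression for the pseudo-gradient already given in equation~\eqref{eq:derive_condition_K2} with the single-agent LQR gradient domination property applied player-by-player. The overarching observation is that when the other agents' feedback gains $K^{-i}$ are held fixed, agent $i$'s problem reduces to a standard single-agent LQR problem on the modified dynamics $x_{t+1}=(A-\sum_{j\neq i}B^{j}K^{j})x_{t}+B^{i}u_{t}^{i}$ with cost matrices $Q^{i}$ and $R^{i}$; hence results about single-agent LQR can be applied coordinate-wise.

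\medskip

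\noindent First, I would isolate the factor $\Sigma_{K}$. Recall from Section~\ref{sec:lq} that for any stabilizing $K\in\mathcal{K}$, Assumption~\ref{assumption: infinite horizon LQ} gives $\Sigma_{0}\succ 0$ and therefore $\Sigma_{K}=\mathbb{E}_{x_{0}}[\sum_{t=0}^{\infty}x_{t}x_{t}^{\top}]\succ 0$. Applying~\eqref{eq:derive_condition_K2} coordinate-wise, $w(K)=0$ is equivalent to
\begin{equation*}
(R^{i}K^{i}-B^{i\top}P_{K}^{i}\bar{A}_{K})\Sigma_{K}=0,\qquad \forall\,i\in[N].
\end{equation*}
Right-multiplying by $\Sigma_{K}^{-1}$ then yields $R^{i}K^{i}=B^{i\top}P_{K}^{i}\bar{A}_{K}$ for every $i$, which is exactly the algebraic condition in part (ii) with $K$ in place of $K^{*}$. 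This immediately delivers the formula claimed for the Nash equilibrium, provided we can show stationarity is indeed equivalent to the Nash property.

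\medskip

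\noindent For the equivalence itself, the forward direction is the easier one: if $K^{*}$ is a local Nash equilibrium, then each $K^{i*}$ is a local minimizer of the unilateral deviation cost $K^{i}\mapsto J^{i}(K^{i},K^{-i*})$, so by first-order optimality $\nabla_{K^{i}}J^{i}(K^{*})=0$ for all $i$, hence $w(K^{*})=0$. For the reverse direction, fix any stabilizing $K$ with $w(K)=0$ and consider agent $i$'s problem with $K^{-i}$ held fixed. This is a single-agent LQR problem, and I would invoke the well-known gradient domination property of LQR (every stabilizing critical point of a single-agent LQR cost is the unique global minimizer over the stabilizing set), together with the fact that the single-agent LQR stationarity condition is precisely $R^{i}K^{i}=B^{i\top}P_{K}^{i}\bar{A}_{K}$. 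Since this holds for every $i$, each $K^{i}$ is a best response to $K^{-i}$, and $K$ is a Nash equilibrium.

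\medskip

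\noindent The main obstacle is the reverse direction, because the LQR cost is not convex in $K$, so stationarity does not a priori certify optimality, even locally, without an auxiliary structural argument. The cleanest way to handle this is to cite the gradient domination inequality for LQR (Fazel et al.), which upgrades stationarity to global optimality on the stabilizing set; since the lemma is stated as a direct restatement of Theorem~3 of~\cite{mazumdar2020policy}, whose proof uses exactly this ingredient, I would appeal to that result to close the argument. A minor secondary point to verify is that the $P_{K}^{i}$ appearing in~\eqref{eq:P-bellman} coincides with the LQR value matrix of agent $i$'s subproblem when $K^{-i}$ is held fixed; this follows by inspection of the Lyapunov equation~\eqref{eq:P-bellman} once $\bar{A}_{K}$ is rewritten as $(A-\sum_{j\neq i}B^{j}K^{j})-B^{i}K^{i}$.
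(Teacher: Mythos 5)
Your argument is correct, and it actually does more work than the paper, which states this lemma purely as a citation to \cite[Theorem~3]{mazumdar2020policy} and never proves it (the proof of Lemma~\ref{lemma:lq-nash-equation-hat} only gestures at the implied argument: write $J^i$ as a quadratic form in $x_0$ via $P_K^i$, differentiate to get \eqref{eq:derive_condition_K2}, set the gradient to zero, and cancel $\Sigma_K \succ 0$). That computation is exactly your first step and yields the stationarity characterization $R^iK^i = B^{i\top}P_K^i\bar{A}_K$; the paper leaves the equivalence between stationarity and the Nash property entirely to the external reference. You fill that gap correctly: the forward direction is first-order necessity, and the reverse direction follows because, with $K^{-i}$ frozen, agent $i$ faces a single-agent LQR on $(A-\sum_{j\neq i}B^jK^j,\,B^i)$ --- stabilizable since $\bar{A}_K$ is stable --- for which gradient domination (Fazel et al., valid here because Assumption~\ref{assumption: infinite horizon LQ} gives $\Sigma_0\succ 0$ and hence $\Sigma_K\succ 0$) upgrades every stabilizing stationary point to the global best response. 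Your closing check that the $P_K^i$ of \eqref{eq:P-bellman} is the value matrix of that subproblem is the right thing to verify and is immediate from rewriting $\bar{A}_K=(A-\sum_{j\neq i}B^jK^j)-B^iK^i$. The only caveat worth flagging is that the reverse direction genuinely needs the gradient-domination (or an equivalent structural) ingredient, exactly as you say; a bare first-order argument would not suffice since $J^i$ is nonconvex in $K^i$. Nothing is missing.
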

\begin{lemma} 
\label{lemma:lq-nash-equation-hat}
Following the notations in Section~\ref{sec:lq-results}, we have 
$\hat{K}^{i*}$ satisfies $R^i \hat{K}^{i*} =\frac{1}{1 + \rho} (B^{i\top}  \hat{P}_{\hat{K}^*}^i \bar{A}_{\hat{K}^*} + \rho R^i \check{K}^i).$
\end{lemma}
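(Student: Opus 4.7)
The plan is to mirror the derivation of the unregularized stationarity condition in Lemma~\ref{lemma:lq-nash-equation} (equivalently, equations \eqref{eq:derive_condition_K1}--\eqref{eq:derive_condition_K2}), but applied to the guided cost $\sumcostregularized^i$ in \eqref{eq:guided_lq_cost}. Since $\hat{K}^*$ is by definition a Nash equilibrium of the guiding-policy-regularized LQ game, each $\hat{K}^{i*}$ is the best response of agent $i$ against $\hat{K}^{-i*}$ with respect to $\sumcostregularized^i$, so the first-order condition $\nabla_{K^i}\sumcostregularized^i(K)\big|_{K=\hat{K}^*}=0$ must hold. The task is therefore to compute this gradient in closed form and solve it for $\hat{K}^{i*}$.

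First I would substitute the linear feedback $u_t^i=-K^i x_t$ into the Bellman recursion for $\sumcostregularized^i$. Because the added penalty $\rho(K^i-\check{K}^i)^\top R^i(K^i-\check{K}^i)$ is quadratic in the state under this feedback, the value function remains a quadratic form $x^\top \hat{P}_K^i x$, where $\hat{P}_K^i$ is the unique positive definite solution of the modified Lyapunov equation given in the statement of Proposition~\ref{prop: L2 stabilization}. This yields the analogue of \eqref{eq:derive_condition_K1}, namely
\begin{equation*}
\sumcostregularized^i(K)=\expectationoverinit\bigl[x_0^\top(Q^i+K^{i\top}R^iK^i+\rho(K^i-\check{K}^i)^\top R^i(K^i-\check{K}^i))x_0+x_0^\top\bar{A}_K^\top\hat{P}_K^i\bar{A}_Kx_0\bigr].
\end{equation*}

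Next I would differentiate with respect to $K^i$, treating $\hat{P}_K^i$ with the same implicit-differentiation trick used to obtain \eqref{eq:derive_condition_K2}. The stage term contributes $2(R^iK^i+\rho R^i(K^i-\check{K}^i))\Sigma_K$, while the expected cost-to-go term contributes $-2B^{i\top}\hat{P}_K^i\bar{A}_K\Sigma_K$ exactly as in the unregularized derivation (the extra $\rho$-term in the Lyapunov equation only affects $\hat{P}_K^i$ itself, not the structural form of the gradient). Setting the result to zero at $K=\hat{K}^*$ and invoking $\Sigma_{\hat{K}^*}\succ 0$ (which holds under Assumption~\ref{assumption: infinite horizon LQ} by stabilizability and full-rank $\Sigma_0$) lets us cancel $\Sigma_{\hat{K}^*}$, leaving
\begin{equation*}
(1+\rho)R^i\hat{K}^{i*}-\rho R^i\check{K}^i-B^{i\top}\hat{P}_{\hat{K}^*}^i\bar{A}_{\hat{K}^*}=0,
\end{equation*}
which rearranges to the claimed identity.

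The only real subtlety is justifying that $w(K)=0$ remains necessary and sufficient for a stationary point of $\sumcostregularized^i$ over the stabilizing set, and that $\Sigma_{\hat{K}^*}$ can be cancelled. The first follows because the regularization is a nonnegative quadratic that preserves positive definiteness of the effective state-weight matrices, so Lemma~\ref{lemma:lq-nash-equation} applies to the guided game essentially verbatim; the second follows exactly as in the proof of Lemma~\ref{lemma:lq-nash-equation} from Assumption~\ref{assumption: infinite horizon LQ}. Everything else is mechanical algebra, so I do not anticipate a genuine obstacle beyond carefully carrying through the implicit differentiation of $\hat{P}_K^i$.
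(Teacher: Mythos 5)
Your proposal is correct and follows essentially the same route as the paper's proof: substitute the linear feedback into the regularized cost to obtain the quadratic form with the modified Lyapunov solution $\hat{P}_{\hat{K}}^i$, differentiate to get $2\bigl((1+\rho)R^i\hat{K}^i-\rho R^i\check{K}^i-B^{i\top}\hat{P}_{\hat{K}}^i\bar{A}_{\hat{K}}\bigr)\Sigma_{\hat{K}}$, and set it to zero, cancelling $\Sigma_{\hat{K}}\succ 0$. The paper's argument is exactly this computation, so no further comparison is needed.
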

\begin{proof}
The proof is similar to the proof of Lemma \ref{lemma:lq-nash-equation}, with additional treatment of the regularization term. Under the feedback policy $\hat{K}$, we get $\sumcostregularized^i(\hat{K}^1,\dots,\hat{K}^N) \!= \!\expectationoverinit \Big[ x_0^\top\Big( (Q^i + {\hat{K}^i}{}^\top R^i \hat{K}^i ) + \rho {(\hat{K}^i - \check{K}^i)} {}^\top {R^i} (\hat{K}^i - \check{K}^i) + \bar{A}_{\hat{K}}^\top \hat{P}_{\hat{K}}^i \bar{A}_{\hat{K}} \Big) x_0 \Big].$ 
Taking the derivative of $\sumcostregularized^i(\hat{K}^1,\dots,\hat{K}^N)$ with respect to $\hat{K}^i$, we get $\nabla_{\hat{K}^i} \sumcostregularized^i(\hat{K}^1,\dots,\hat{K}^N) = 2 \big((1+\rho)R^i \hat{K}^i - \rho R^i \check{K}^i - {B^i}^{\top} \hat{P}_{\hat{K}}^i \bar{A}_{\hat{K}} \big) \Sigma_{\hat{K}}.$ Setting the gradient to zero completes the proof.
\end{proof}

\begin{proof}[Proof of Proposition \ref{prop: L2 stabilization}]
\textbf{1)} The local Lipschitz continuity of $w(K)$ follows from Theorem 4.2 of \cite{mazumdar2020policy}. The existence of the set $\mathcal{K}_r$ is ensured because there exists a lower bound for a Lipschitz continuous function $w(K)$ over a compact set~\cite{rudin1987real}. 

\textbf{2)} Recall the coupled Ricatti equation $\hat{P}^{i*} = Q^i + (A-\sum_{j=1}^N B^j \hat{K}^{j*})^\top \hat{P}^{i*} (A-\sum_{j=1}^N B^j \hat{K}^{j*}) + \rho (\hat{K}^{i*} - \check{K}^i)^\top R^i (\hat{K}^{i*} - \check{K}^i)$ 
admits a unique solution when $(A,B)$ is stabilizable and $Q^i$ and $R^i$ are positive definite. We have 
\begin{equation}\label{eq:regulated Ricatti equation}
    \hat{K}^{i*} - \check{K}^i\hspace{-0.3em} = \hspace{-0.3em} (R^i + \frac{1}{\rho} B^{i\top} \hat{P}^{i*}  B^{i})^{-1} \cdot \frac{1}{\rho} B^{i\top} \hat{P}^{i*} (A - \sum_{j=1}^N B^j \hat{K}^{j^*})
\end{equation}
Define the following two matrices $G$ and $H$:
\begin{equation*}
\begin{aligned}
    G \hspace{-0.1em}:= &\rho \hspace{-0.4em}
    \begin{bmatrix}
        R^1  & \  &  \  \\
        \  & \ddots & \ \\
        \  & \ & R^N
    \end{bmatrix}\hspace{-0.3em}
    + \hspace{-0.3em}\begin{bmatrix}
        B^{1\top} \hat{P}^{1*} B^1 \hspace{-0.5em} & \hdots & \hspace{-0.5em} B^{1\top} \hat{P}^{1*}B^N\\
        \vdots \hspace{-0.5em} & \ddots &\hspace{-0.5em} \vdots \\ 
        B^{N\top} \hat{P}^{N*} B^1 \hspace{-0.5em} & \cdots & \hspace{-0.5em}  B^{N\top}\hat{P}^{N*} B^N 
    \end{bmatrix}
\end{aligned}
\end{equation*}
and 
\begin{equation}
    H := \begin{bmatrix}
        B^{1\top} \hat{P}^{1*} (A - \sum_{j=1}^N B^j \check{K}^j) \\
        \vdots \\
        B^{N\top} \hat{P}^{N*} (A - \sum_{j=1}^N B^j \check{K}^j) 
    \end{bmatrix}
\end{equation}
Recall the Nash equilibrium joint Ricatti equation \cite{bacsar1998dynamic}, $\hat{K}^* - \check{K} = (G)^{-1} H$. 
Since all eigenvalues of $(A - \sum_{j=1}^N B^j \check{K}^j)$ are strictly less than $1$, we have the Ricatti equation \eqref{eq:regulated Ricatti equation} admits a unique and bounded solution $\hat{P}^{i*}$ when $\|\hat{K}^{i*} - \check{K}^i \|_2 = 0$, $\forall i\in[N]$. This suggests that as $\rho \to \infty$, the solution $\hat{P}^{i*}$ will not go to infinity as there always exists a set of policies for each player that achieve finite and bounded costs. Therefore, 
we have $\lim_{\rho \to \infty} \|\hat{K}^* - \check{K}\|_2 = 0$. This also suggests that, as $\rho \to \infty$, for all radius $r$ such that $\{K: \|K - \check{K}\|_2\le r\} \subset \mathcal{K}$, we will eventually have $\rho >> \lambda_{\min}$, where $\lambda_{\min}$ represents that lower bound of the real part of the eigenvalues of $\nabla w(K)$, $K \in \{K: \|K - \check{K}\|_2\le r\}$.

3) Finally, from Lemmas~\ref{lemma:lq-nash-equation} and \ref{lemma:lq-nash-equation-hat}, we get $R^i \left(\hat{K}^{i*} - \left(\frac{\rho}{1+\rho}\check{K}^i + \frac{1}{1+\rho}K^{i*} \right)\right)  = \frac{1}{1+\rho}{B^i}^\top (\hat{P}_{\hat{K}^*}^i \bar{A}_{\hat{K}^*} - P_{K^{*}}^{i} \bar{A}_{K^*}),$ 
from which we get \eqref{eq:bias-bound}.
\end{proof}

\subsection{LQ Game Dynamics}
\label{sec:lq-AB}
\textbf{Model.} The system dynamics are represented by the state matrix $\statematrix$ and control matrices $\controlmatrix^1$ and $\controlmatrix^2$:
\[
\statematrix = 
\begin{bmatrix}
0.5880 & 0.0280 \\
0.5700 & 0.0560
\end{bmatrix},
\quad
\controlmatrix^1 = 
\begin{bmatrix}
1.0 \\
0.1
\end{bmatrix},
\quad
\controlmatrix^2 = 
\begin{bmatrix}
0 \\
1.0
\end{bmatrix}.
\]
\noindent \textbf{Costs.} The cost function is defined by the weighting matrices $Q^1$, $Q^2$, $R^1$, and $R^2$:
\begin{equation}
\begin{aligned}
&Q^1 = 
\begin{bmatrix}
0.0100 & 0 \\
0 & 1.0000
\end{bmatrix},
\quad
Q^2 = 
\begin{bmatrix}
1.0000 & 0 \\
0 & 0.0147
\end{bmatrix}, \\&  R^1 = 0.01,
\quad
R^2 = 0.01.
\end{aligned}
\end{equation}
The initial state is drawn from a distribution that assigns probability $0.5$ to $x_0 = [1,1]$ and probability $0.5$ to $x_0 = [1, 1.1]$. 

\subsection{Vehicle platooning implementation details}
\label{sec:platooning-details}
\noindent \textbf{Model.} We model the cars as unicycles with the state ${x}^{i} = [p_{x}^{i}, p_{y}^{i}$, $ v^{i} , \theta^{i}]$, ${u}^{i} = [a^{i} , \omega^{i}]$ where  $p_{x}^{i}$, $p_{y}^{i}$, $ v^{i}$ and $\theta^{i}$ the x-coordinate, y-coordinate, velocity and heading angle of agent i $\in$ $\{1,2,3\}$. Car 1, the leader, aims to guide Car 2 and 3 to a target lane $p_x^{*}=0.5$.

\noindent \textbf{Costs.} The cost of each car is defined as follows
\begin{equation}
\begin{aligned}
c^1(x, u) = 
    & 5\big(\left( p_{x,t}^{2} - p_x^* \right)^2 + \left(p_{x,t}^{3} - p_x^* \right)^2\big) \\ & +
    \left( v_{t}^{1} - 1 \right)^2 +
    \left( \theta_{t}^{1} - \dfrac{\pi}{2} \right)^2 +
    {a_t^1}^2 +
    {\omega_t^1}^2,
\end{aligned}
\end{equation}
which is the cost of the leader vehicle, and
\begin{equation}
\begin{aligned}
c^i(x, u) = 
    & 5\left( p_{x,t}^{i} - p_{x,t}^{1} \right)^2 +
    \left(  v_{t}^{i} - 1 \right)^2 +
    \left(\theta_{t}^{i} - \dfrac{\pi}{2} \right)^2 \\& +
    {a_t^i}^2 +
    {\omega_t^i}^2
\end{aligned}
\end{equation}
for $i = 2, 3$, the following vehicle.

\noindent \textbf{Training.} The training for MA-GPS, IPPO and MA-PPO use the following parameters: 3 hidden layers and 512 hidden width (for both policy and value function networks), discount factor $\gamma = 0.99$, coefficient of entropy loss is 0.005, coefficient of squared-error loss $\left(V^{i,\pi}(x) - V^{\text{target}}\right)^2$ is 0.25, policy learning rate of $1 \times 10^{-4}$ and GAE $\lambda$ of 0.99. The reported results are from 5 random seeds for each method.

\subsection{ Six-player strategic basketball formation implementation details}
\label{sec:basketball-details}

\noindent \textbf{Model.} In this experiment, we model the player as double integrator with the state and the control input ${x}^{i} = [p_{x}^{i}, p_{y}^{i}$,$ v_{x}^{i} , v_{y}^{i}], {u}^{i} = [u_{x}^{i},u_y^i] $.

\noindent \textbf{Costs.} For i $\in \{1, 3\}$, representing the center and the shooting guard of the offender's team, the cost function $\stagecost^{i}(x, u)$ is defined as:
\begin{equation}
\begin{aligned}
    \stagecost^i(x, u) =& ({p_{x}^{i}}^2 + {p_{y}^{i}}^2 - r_i )^2 
    + ({v_{x}^{i}}^2 + {v_{y}^{i}}^2) \\ & + ({u_{x}^{i}}^2 + {u_{y}^{i}}^2) + 100 \max\{p_{y}^{i}, 0\},
    \label{eq:J_definition}
\end{aligned}
\end{equation}
where $r_i = 1.5, 4$ for $i =1, 3$, respectively. 
For i $\in \{5\} $, representing the forward of the offender's team, the cost is defined as:
\begin{equation}
\stagecost^{i}(x, u) = \left[ (p_{x}^{i} -p_{x}^{i-1})^2 + (p_{y}^{i} - p_{y}^{i-1})^2 \right] 
+ (u_{x}^{i})^2 + (u_{y}^{i})^2,
\end{equation}
Aiming at screening the defense team player defending the guard. For i $\in \{2, 4, 6\}$, representing the defense team, the cost function $\stagecost^{i}(x, u)$ is defined as:
\begin{equation}
\begin{aligned}
\stagecost^{i}(x, u) = & \left[ (p_{x}^{i} - 0.75 p_{x}^{i-1})^2 + (p_{y}^{i} - 0.75 p_{y}^{i-1})^2 \right]  
\\& + (u_{x}^{i})^2 + (u_{y}^{i})^2.
\end{aligned}
\end{equation}

\noindent \textbf{Training.} The training for MA-GPS, IPPO and MA-PPO use the following parameters: 3 hidden layers and 512 hidden width (for both policy and value function networks), discount factor $\gamma = 0.99$, coefficient of entropy loss is 0.005, coefficient of squared-error loss $\left(V^{i,\pi}(x) - V^{\text{target}}\right)^2$ is 0.25, policy learning rate of $1 \times 10^{-4}$ and GAE $\lambda$ of 0.99. The reported results are from 5 random seeds for each method.

\end{document}